\newcommand{\remove}[1]{{}}
\newcommand{\ABox}{
\raisebox{3pt}{\framebox[6pt]{\rule{6pt}{0pt}}}
}
\newenvironment{proof}{{\bf Proof:}}{\hfill\ABox}
\newtheorem{theorem}{{\bf Theorem}}
\newtheorem{lemma}{Lemma}
\newcommand{\lemlab}[1]{\label{lemma:#1}}
\newcommand{\thmlab}[1]{\label{thm:#1}}
\newcommand{\figlab}[1]{\label{fig:#1}}
\newcommand{\seclab}[1]{\label{sec:#1}}
\newcommand{\lemref}[1]{\ref{lemma:#1}}
\newcommand{\thmref}[1]{\ref{thm:#1}}
\newcommand{\secref}[1]{\ref{sec:#1}}
\newcommand{\figref}[1]{\ref{fig:#1}}
\def\P{{\mathcal P}}
\def\C{{\mathcal C}}
\def\bcC{{\partial \mathcal C}}
\def\g{{\gamma}}
\def\f{{\phi}}
\def\F{{\Phi}}
\def\l{{\lambda}}
\def\r{{\rho}}
\def\o{{\omega}}
\def\d{{\delta}}
\def\e{{\varepsilon}}
\def\R{{\mathbb{R}}}
\newcommand{\squeezelist}{\setlength{\itemsep}{0pt}}
\title{Un-unzippable Convex Caps%
\thanks{The title mimics
that of the paper~\protect\cite{bdekms-upcf-03}:
``Ununfoldable Polyhedra with Convex Faces''
}}
\author{
Joseph O'Rourke%
    \thanks{Department of Computer Science, Smith College, Northampton, MA, USA.
      \protect\url{jorourke@smith.edu}.}
}
\begin{document}
\maketitle

\begin{abstract}
An unzipping of a polyhedron $\P$ is a cut-path through its vertices
that unfolds $\P$ to a non-overlapping shape in the plane.
It is an open problem to decide if every convex $\P$ has an unzipping.
Here we show that there are nearly flat convex caps that
have no unzipping.
A convex cap is a ``top'' portion of a convex polyhedron;
it has a boundary, i.e., it is not closed by a base.
\end{abstract}

\section{Introduction}
\seclab{Introduction}
We define an \emph{unzipping} of a polyhedron $\P$ in $\R^3$ to be a 
non-overlapping, single-piece unfolding
of the surface to the plane that results from cutting a continuous path 
$\g$ through all the vertices of $\P$. 
The cut-path $\g$ need not follow the edges of $\P$, nor even
be polygonal, but it must include
every vertex of $\P$, passing through $n-2$ vertices and beginning and
ending at the other two vertices, where $n$ is the total number of vertices.
If $\g$ does follow edges of $\P$, we call it an \emph{edge-unzipping}.

Edge-unzippings are special cases of edge-unfoldings, where the cuts follow
a tree of edges that span the $n$ vertices.
The interest in edge-unfoldings stems largely from what has
become known as D\"urer's problem~\cite{do-gfalop-07}~\cite{o-dp-13}:
Does every convex polyhedron have an edge-unfolding?
The emphasis here is on a non-overlapping result, what is often
called a \emph{net} for the polyhedron.
This question was first formally raised by Shephard in~\cite{s-cpcn-75}.
In that paper, he already investigated the special case where the
cut edges form a Hamiltonian path of the $1$-skeleton of $\P$:
Hamiltonian unfoldings. These are exactly what I'm calling edge-unzippings.
Shephard noted that the rhombic dodecahedron does
not have an edge-unzipping because its $1$-skeleton has no
Hamiltonian path.

The attractive ``zipping'' terminology stems from the paper~\cite{lddss-zupc-10},
which defined \emph{zipper unfoldings} to be what I'm shortening to unzippings.
They showed that all the Platonic and the Archimedean solids have
edge-unzippings.
And they posed a fascinating question: 

\begin{center}
\fbox{\textbf{Open Problem}: Does every convex polyhedron have an unzipping?}
\end{center}

\subsection{Nonconvex Polyhedra}
\seclab{NonconvexPolyhedra}
First we note that not every nonconvex polyhedron has an unzipping.
This has been a ``folk theorem'' for years, but has apparently not been
explicitly stated in the literature.\footnote{
The closest is~\cite{bdekms-upcf-03}, which notes that
``the neighborhood of a negative-curvature vertex ... requires two or more
cuts to avoid self-overlap.''}
In any case, it is not difficult to see.

Consider the polyhedron illustrated in Fig.~\figref{heartsfan}.
The central vertex $v$ has more than $4 \pi$ incident surface angle.
In fact, it has well more than $8 \pi$ incident angle, but we only need $> 4 \pi$.
An unzipping cut-path $\g$ cannot terminate at $v$, because the neighbhood
of $v$ in the unfolding has more than $2\pi$ incident angle, and so would
overlap in the planar development.
Nor can $\g$ pass through $v$, because partitioning the $> 4 \pi$ angle would
leave more than $2\pi$ to one side or the other, again forcing overlap
in the neighborhood of at least one of the two planar images of $v$.
Therefore, no polyhedron with a vertex with more than $> 4 \pi$ incident angle
has an unzipping.
Indeed, as Stefan Langerman observed,\footnote{
Personal communication, Aug. 2017} 
similar reasoning shows that for
any degree $\d$ there is a polyhedron that cannot be unfolded without overlap
by a cut tree of maximum degree $\d$.
The polyhedron in Fig.~\figref{heartsfan} requires degree $> 4$ at $v$
to partition the more than $8 \pi$ angle into $< 2\pi$ pieces.
%see Fig.~\figref{heartsfan}.
%%%%%%%%%%%%%%%%%%%%%%%%%%%%%%%%%Figure Begin
\begin{figure}[htbp]
\centering
\includegraphics[width=0.4\linewidth]{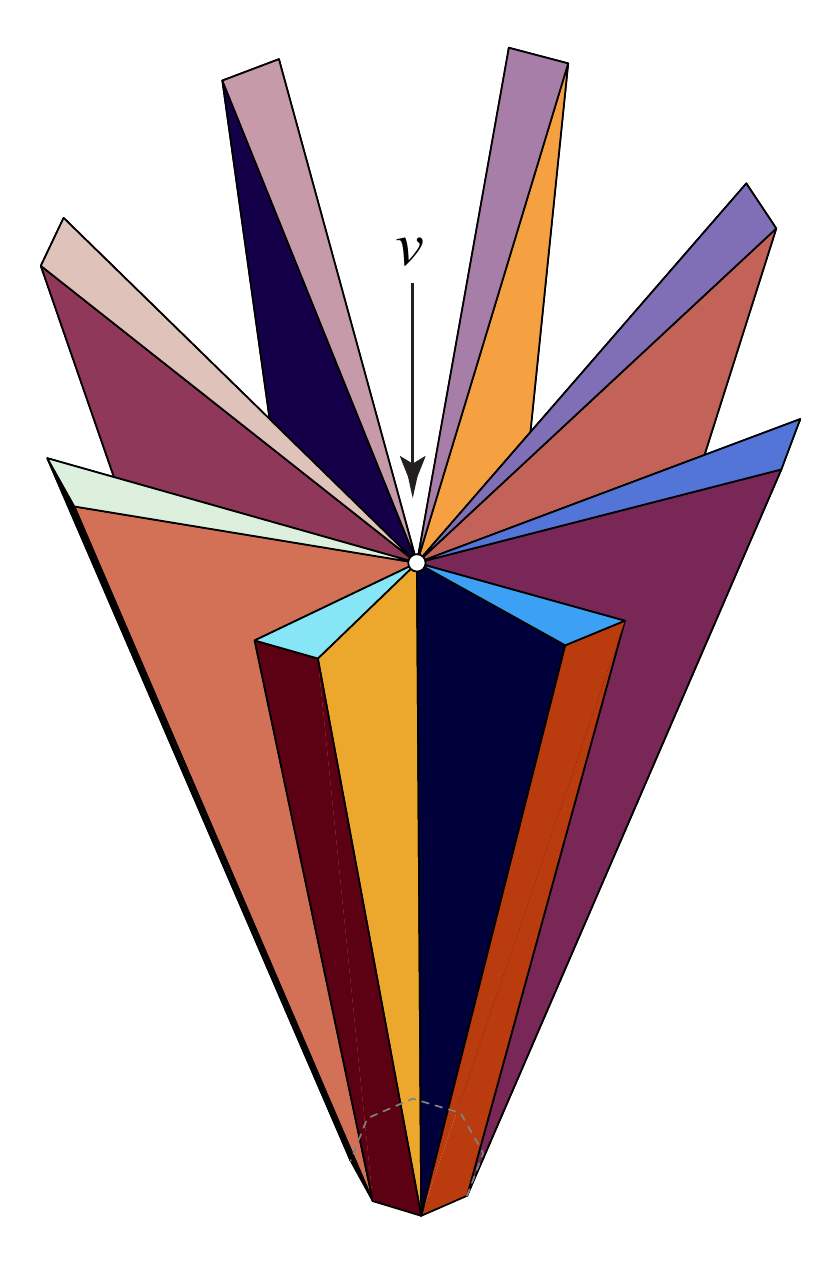}
%\fbox{Figures/xxx.pdf}
\caption{A polyhedron that cannot be unzipped.
Based on Fig.~24.14, p.370 in~\protect\cite{do-gfalop-07}.}
\figlab{heartsfan}
\end{figure}
%%%%%%%%%%%%%%%%%%%%%%%%%%%%%%%%%Figure End

\subsection{Open Problem: Conjecture}
This negative result for nonconvex polyhedra increases the interest in the open problem for convex polyhedra.
In~\cite{o2015spiral} I conjectured the answer is {\sc no}, but it seems
far from clear how to settle the problem.
For that reason, here we turn to a very special case.

\subsection{Convex Caps}
\seclab{ConvexCaps}
The special case is unzipping ``convex caps.''
I quote the definition from~\cite{o-eunfcc-17}:
\begin{quotation}
\noindent
``Let $\P$ be a convex polyhedron, and let $\phi(f)$  be
the angle the normal to face $f$ makes with the $z$-axis.
Let $H$ be a halfspace whose bounding plane is orthogonal to the $z$-axis, and includes points
vertically above that plane.
Define a \emph{convex cap} $\C$ of angle $\F$ to be $C=\P \cap H$
for some $\P$ and $H$, such that $\f(f) \le \F$ for all $f$ in $\C$. [...]
%We will only consider $\F < 90^\circ$, which implies that the projection
%$C$ of $\C$ onto the $xy$-plane is one-to-one.
Note that $\C$ is not a closed polyhedron; it has no ``bottom,''
but rather a boundary $\bcC$.''
\end{quotation}

\noindent
The result of this note is:
\begin{theorem}
For any $\F > 0$, there is a convex cap $\C$ that has no unzipping.
\thmlab{ZipCex}
\end{theorem}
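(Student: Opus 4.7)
The plan is an explicit construction accompanied by a case analysis on the structure of a hypothetical unzipping.

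A natural first observation is that any convex cap with a single interior (``apex'') vertex $v^*$ admits a ``fan'' unzipping: take $\g$ to start at $v^*$, cut the single incident interior edge $v^* v_1$, and then traverse $\bcC$ vertex by vertex, ending at the last boundary vertex. The only real cut is $v^* v_1$, and the surrounding faces unfold into a planar fan about $v^*$ of total angle $\alpha(v^*) < 2\pi$; convexity of $\C$ guarantees non-overlap. So the desired counterexample must have at least two interior vertices.

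A second, topological restriction limits the possible $\g$. Since $\C$ is a topological disk and has interior vertices, $\g$ can have at most one endpoint on $\bcC$, because a cut-path from boundary to boundary through the interior would disconnect the cap into two pieces. Consequently, with exactly two interior vertices $v_1^*, v_2^*$, one of the following holds: (A) the two endpoints of $\g$ are $v_1^*$ and $v_2^*$, with every boundary vertex appearing in the interior of $\g$; or (B) one endpoint is interior and the other lies on $\bcC$, so the remaining interior vertex lies in the interior of $\g$ and its near-$2\pi$ surface angle is split by the cut.

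The construction I have in mind is, for each $\F > 0$, a very shallow ``ridge cap'' with two apex vertices joined by a short ridge edge above a polygonal boundary, scaled so that every face normal makes angle at most $\F$ with the vertical. The geometry (ridge length, number and positions of boundary vertices, and controlled asymmetries) is tuned to block both cases simultaneously. The quantitative key is the near-degeneracy of flat caps: interior vertex angles are close to $2\pi$, so passing through one in case (B) creates two copies separated only by a narrow slit of angular width $2\pi - \alpha(v^*)$, forcing the faces on the two sides to almost angularly coincide and, for the right geometry, actually overlap; while in case (A), the total boundary-turning budget of $2\pi$ in the unfolded region is redistributed awkwardly among the corners coming from $v_1^*, v_2^*$ (each contributing $\pi-\alpha(v_j^*)\approx -\pi$) and the doubled boundary vertices, forcing the unfolded boundary to wrap back on itself.

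The principal obstacle is that $\g$ need not follow edges of $\C$: it can pass through face interiors with a continuum of degrees of freedom. The strategy for handling this is to argue that, for the chosen near-flat cap, any such cut is geometrically equivalent, up to an $O(\F)$ perturbation, to an edge-path with the same combinatorial vertex ordering, so the uncountably many possible cuts reduce to finitely many combinatorial cases; each remaining case is then ruled out by direct geometric inspection using the specifics of the construction.
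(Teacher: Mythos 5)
Your proposal is a plan rather than a proof, and the two load-bearing steps are both in doubt. First, the construction: a shallow ridge cap with exactly two interior apex vertices is very likely \emph{unzippable} --- cut from one apex through the other and then straight out to $\bcC$; the development consists of two nearly congruent halves separated by a thin slit whose opening angle is the (tiny) total curvature, and nothing forces overlap. Your heuristic that passing through a near-$2\pi$ vertex ``forces the faces on the two sides to almost angularly coincide and \ldots actually overlap'' is not correct: every successful unfolding of a nearly flat cap does exactly this, and the paper exhibits non-overlapping developments of precisely this kind. The paper's counterexample instead uses \emph{four} interior vertices (three vertices $a_1,a_2,a_3$ just inside the boundary of a thin $\e$-apron plus a central vertex $c$), arranged so that \emph{every} vertex ordering of the cut-path requires a sharp turn ($30^\circ$--$60^\circ$) at some $a_i$ to stay inside the apron, while the developed geometry forces the connecting sub-path to turn through roughly $70^\circ$--$90^\circ$ there. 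It is this tension between a thin apron and the forced turning angle, not the narrowness of the slit, that produces the obstruction. Your case (A) is also internally inconsistent: if every boundary vertex lies in the interior of $\g$, then $\g$ touches $\bcC$ at several points and disconnects the cap, which your own topological restriction rules out.

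Second, the reduction of arbitrary (non-edge) cuts to ``an edge-path with the same combinatorial vertex ordering, up to an $O(\F)$ perturbation'' is asserted but not justified, and it is essentially the entire difficulty of the problem. The paper handles this differently and more carefully: once the combinatorial order of vertices along $\g$ is fixed, the developed \emph{images of the vertices} are completely determined (independent of the shape of $\g$ between vertices), because the opening at each vertex is its full curvature. One then argues about where the free sub-arcs of $R$ and $L$ between consecutive vertex images can go without crossing the already-placed portions; iterating this forces each sub-arc onto (or outside) a circular arc centered at the combined center of the preceding rotations, and that arc exits the thin apron. Without an argument of this kind, the ``continuum of degrees of freedom'' you identify as the principal obstacle remains unaddressed, so the proof as proposed has a genuine gap even if a correct construction were substituted.
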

%\begin{proof}
%\end{proof}
\noindent
Because this holds for any $\F > 0$, 
there are arbitrarily flat convex caps that cannot be unzipped.
($\F$ will not otherwise play a role in the proof.)

\section{Proof of Theorem~1}
\seclab{proof}
The convex caps used to prove the theorem are all variations on the cap
shown in Fig.~\figref{ZipCex3D}.
The base $\bcC = (b_1,b_2,b_3)$ forms a unit side-length equilateral triangle in the $xy$-plane.
The three vertices $a_1,a_2,a_3$ are also the corners of an equilateral triangle,
lifted a small amount $z_a$ above the base.
In projection to the the $xy$-plane, the ``apron'' of quadrilaterals between
$\triangle b_1 b_2 b_3$ and $\triangle a_1 a_2 a_3$ has width $\e > 0$.
The vertex $c$, at height $z_c > z_a$, sits over the centroids of the equilateral triangles.
The shape of the cap is controlled by three parameters: $\e, z_a, z_b$.
Keeping $\e$ fixed and varying $z_a$ and $z_c$ permits controlling the
curvatures $\o_a$ at $a_i$ and $\o_c$ at $c$.
In Fig.~\figref{ZipCex3D}, $\e=0.1$
and $z_a,z_c = 0.02, 0.1$
leads to $\o_a=1.9^\circ$ and $\o_c=5.6^\circ$.
%5.55759,1.89625
%0.1, 0.02, 0.1
%see Fig.~\figref{ZipCex3D}.
%%%%%%%%%%%%%%%%%%%%%%%%%%%%%%%%%Figure Begin
\begin{figure}[htbp]
\centering
\includegraphics[width=0.75\linewidth]{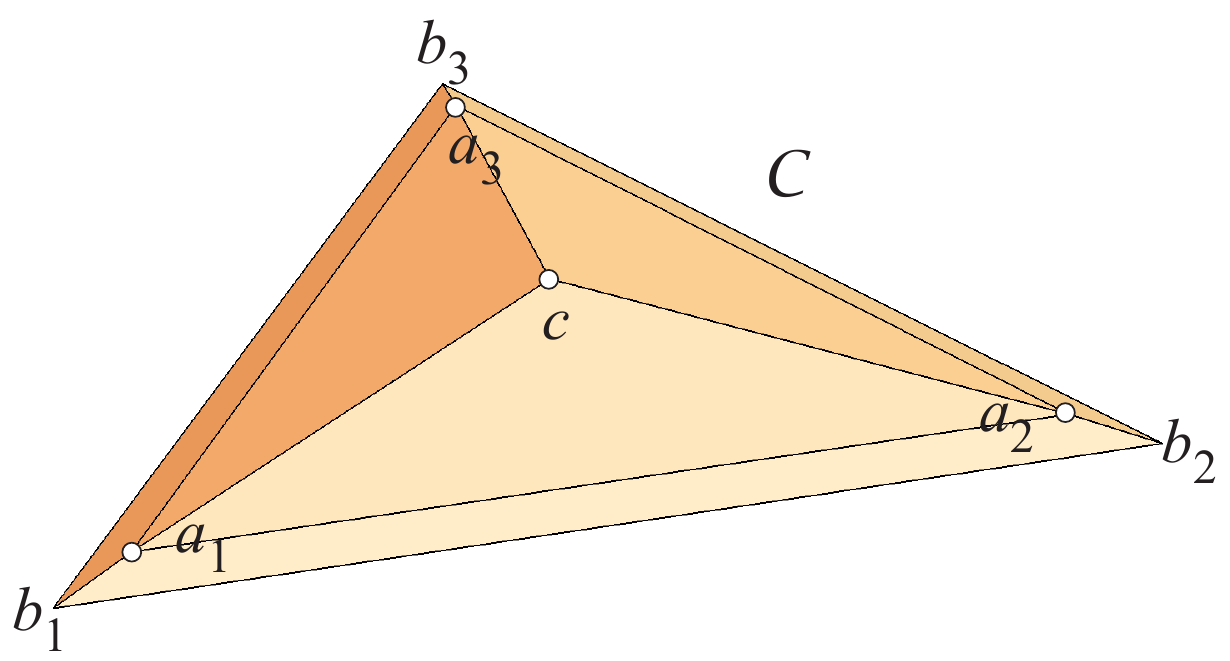}
%\fbox{Figures/xxx.pdf}
\caption{A convex cap $\C$ that has no unzipping.}
\figlab{ZipCex3D}
\end{figure}
%%%%%%%%%%%%%%%%%%%%%%%%%%%%%%%%%Figure End

A typical attempt at an unzipping (of a variant of
Fig.~\figref{ZipCex3D}) is shown in Fig.~\figref{Layout_caaa}.
In general we will only display what are labeled $L$ and $R$ in this figure, rather
than the full unfolding.
%%%%%%%%%%%%%%%%%%%%%%%%%%%%%%%%%Figure Begin
\begin{figure}[htbp]
\centering
\includegraphics[width=0.6\linewidth]{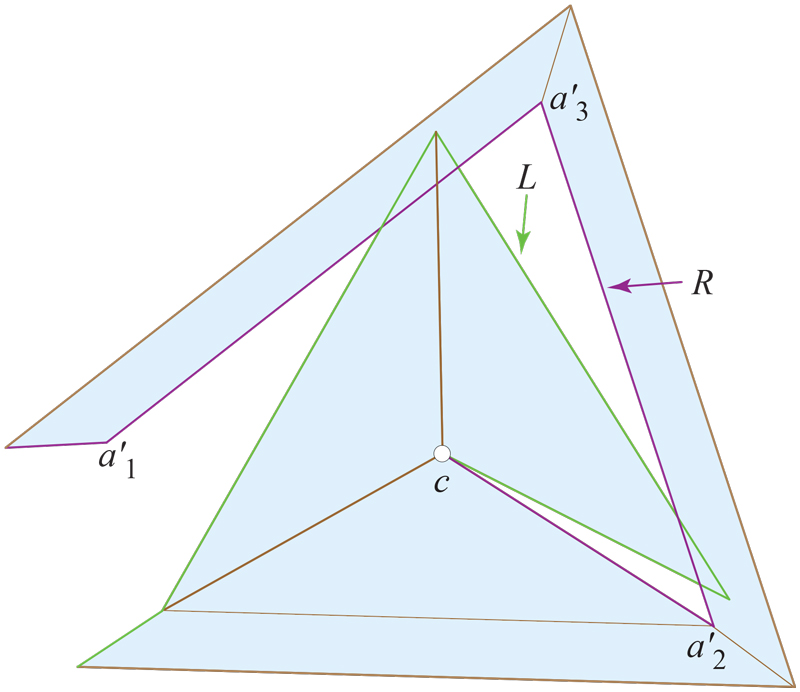}
%\fbox{Figures/xxx.pdf}
\caption{An overlapping unfolding of a convex cap 
(a variant of Fig.~\protect\figref{ZipCex3D})
from cut-path
$\g =(c, a_2, a_3, a_1, b_1)$.
Compare Fig.~\protect\figref{Devel_caaa_linear_5_10} ahead.}
\figlab{Layout_caaa}
\end{figure}
%%%%%%%%%%%%%%%%%%%%%%%%%%%%%%%%%Figure End
From now on we will illustrate cut-paths and unzippings in the plane, starting
from Fig.~\figref{EqTris} (and not always repeating all the labels).
%see Fig.~\figref{EqTris}.
%%%%%%%%%%%%%%%%%%%%%%%%%%%%%%%%%Figure Begin
\begin{figure}[htbp]
\centering
\includegraphics[width=0.5\linewidth]{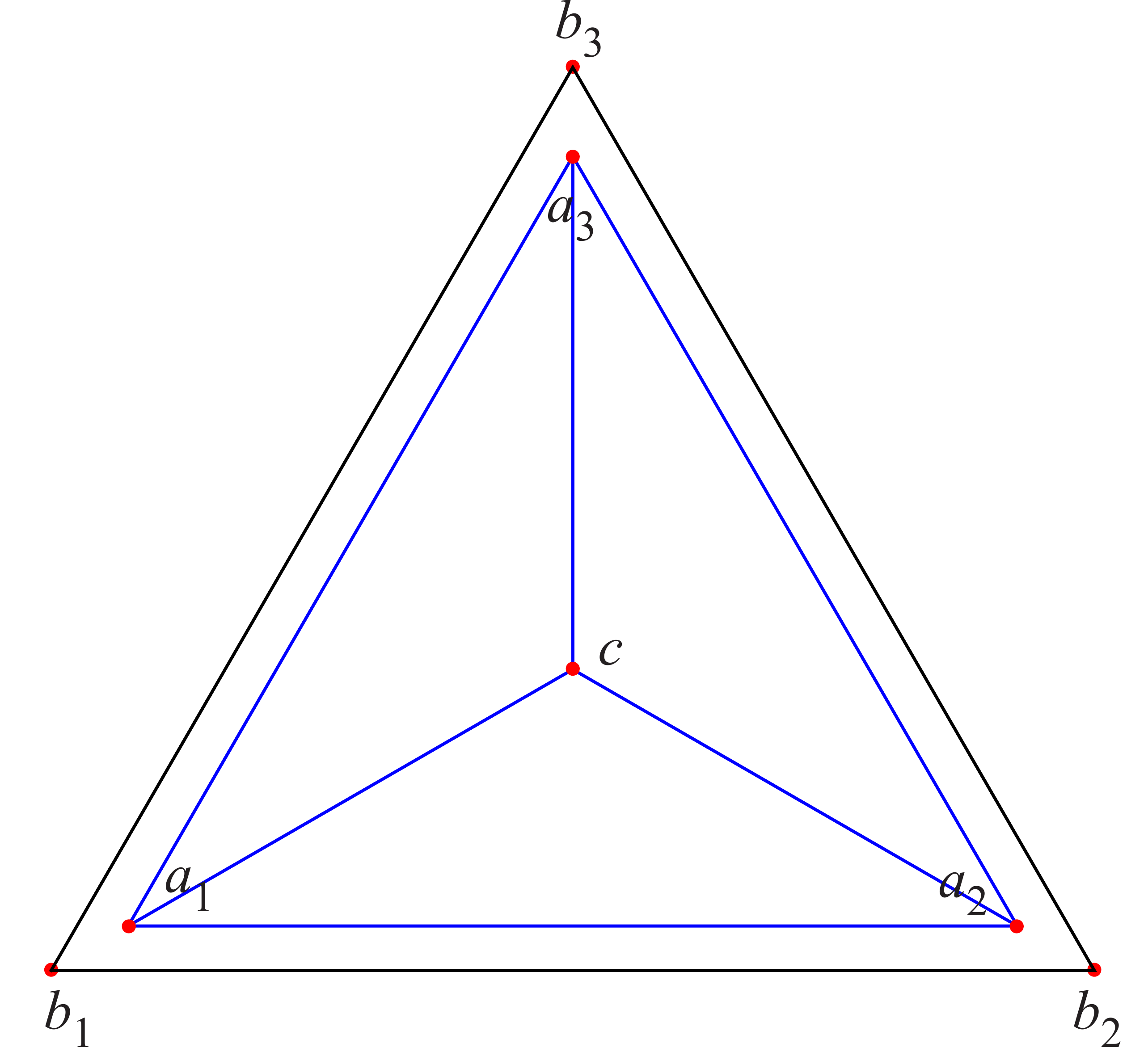}
%\fbox{Figures/xxx.pdf}
\caption{Projection of Fig.~\protect\figref{ZipCex3D} to $xy$-plane.}
\figlab{EqTris}
\end{figure}
%%%%%%%%%%%%%%%%%%%%%%%%%%%%%%%%%Figure End

\subsection{Constraints on the cut-path}
\seclab{Constraints}
Any point $p$ in the relative interior of $\g$ (i.e., not an endpoint)
develops in the plane to two points $p'$ and $p''$,
with right and left incident surface angles $\r=\r(p)$ and $\l=\l(p)$.
If $p$ is not at a vertex of $\C$, then $\l+\r=2\pi$.
If $p$ is at a vertex of curvature $\o=\o(p)$, then $\l+\r+\o=2\pi$.
We will show the development $\C$ as cut by $\g$ by drawing two directed paths
$R$ and $L$, each determined by the $\r$ and $\l$ angles,
which deviate by $\o(v)$ at each vertex $v \in \g$.
The surface of $\C$ is right of $R$ and left of $L$
(see Fig.~\figref{Layout_caaa}), but not explicitly
depicted in subsequent figures.

The constraints on $\g$ to be an unzipping are:
\begin{enumerate}
\squeezelist
\item $\g$ must be a path, by definition of unzipping.
\item $\g$ must start at one of the vertices $\{c,a_1,a_2,a_3\}$
and terminate on $\bcC$.
\item $\g$ does not have to include any of the vertices $\{b_1,b_2,b_3\}$,
it just needs to exit $\C$ at some point of $\bcC$.
\item $\g$ can only touch $\bcC$ at one point, for if it touches at two
or more points, the unfolding would be disconnected into more than one piece.
\item Between vertices, $\g$ can follow any path on $\C$, as long as $\g$ 
does not self-cross, which would again result in more than one piece.
\item And of course, the developments of $R$ and $L$ must not cross in the plane,
for $R / L$ crossings imply overlap.\footnote{
The reverse is not always true: It could be that $R$ and $L$ do not cross, but
other portions of the surface away from the cut $\g$ are forced to overlap
by, for example, large curvature openings.}
\end{enumerate}

We think of $\g$ as directed from its root start vertex
to $\bcC$; the path opens from the root to its boundary exit.
The main constraint we exploit is item~4: $\g$ can only touch $\bcC$ at one point.
We will see that only by leaving $\C$ and returning could the unzipping avoid overlap.

Due to the symmetry of $\C$---in particular, the equivalence of $\{a_1,a_2,a_3\}$---there are only four combinatorially distinct possible cut-paths $\g$,
where we use $b$ to represent any point on $\bcC$:
\begin{enumerate}
\squeezelist
\item $\g = (c, a_1, a_2, a_3, b) = caaab$.
\item $\g = (a_1, c, a_2, a_3, b) = acaab$.
\item $\g = (a_1, a_2, c, a_3, b) = aacab$.
\item $\g = (a_1, a_2, a_3, c, b) = aaacb$.
\end{enumerate}
We abbreviate the path structure with strings $acaab$ and so on, with the obvious meaning.
It turns out that the location of $b$, the point at which $\g$ exits $\C$, plays
little role in the proof.

We will display the structure of $\g$ and the developments of $R$ and $L$
as in
Fig.~\figref{Path_acaa_linear}.
%%%%%%%%%%%%%%%%%%%%%%%%%%%%%%%%%Figure Begin
\begin{figure}[htbp]
\centering
\includegraphics[width=0.75\linewidth]{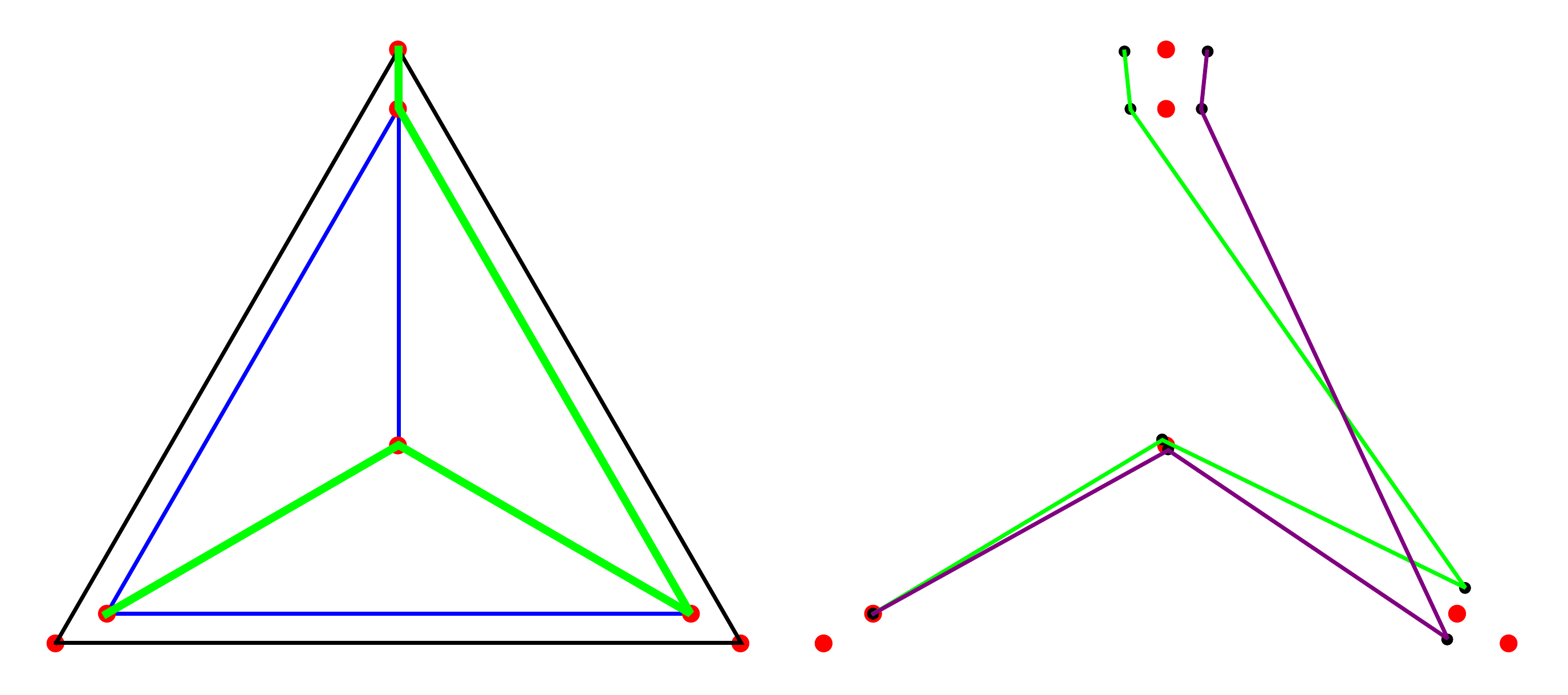}
%\fbox{Figures/xxx.pdf}
\caption{%
Left: path $\g = (a_1, c, a_2, a_3, b_3) = acaab$.
Right: $R$ and $L$ developed.
$\{\o_a,\o_c\}=\{5^\circ,10^\circ\}$.}
\figlab{Path_acaa_linear}
\end{figure}
%%%%%%%%%%%%%%%%%%%%%%%%%%%%%%%%%Figure End
Here $\g$ is shown following straight segments between vertices, and
the developments overlap substantially.
But as per item~5 above, $\g$ can follow potentially any 
(non-self-intersecting) curve between vertices. 
However, the developed images of the vertices are independent of the
shape of the path
between vertices, a condition we exploit in the proof.
So once the combinatorial structure of the cut $\g$ is fixed, the developed
locations of the vertex images are determined.
We will continue to use $\{\o_a,\o_c\}=\{5^\circ,10^\circ\}$ for illustration,
although any smaller curvatures also work in the proofs.

\subsection{Radial Monotonicity: Intuition}
\seclab{RM}
Before beginning the proof details, we provide the intuition behind it.
That intuition depends on the notion of a ``radially monotone'' curve,
a concept used in~\cite{o-ucprm-16} and~\cite{o-eunfcc-17}.
A directed polygonal chain $P$ in the plane with vertices $u_1, u_2, \ldots, u_k$ is
\emph{radially monotone with respect to $u_1$} if the distance from $u_1$ to
every point $p \in P$ increases monotonically as $p$ moves out along the chain.
$P$ is \emph{radially monotone} if it is radially monotone with respect to each
vertex $u_i$: concentric circles centered on each $u_i$ are crossed just
once by the chain beyond $u_i$.

If both the $R$ and $L$ developments are radially monotone, then $L$ and $R$
do not intersect except at their common ``root'' vertex, a fact proved in the
cited papers.\footnote{
There are some curvature bound assumptions to this claim that are not relevant here.}
This suggests that $\g$ should be chosen so that $R$ and $L$ are radially monotone.
However, if $R$ or $L$ or both are not radially monotone, they do not necessarily
overlap: radial monotonicity is sufficient for non-overlap but not necessary.
Nevertheless, striving for radial monotonicity makes sense.
The sharp turns necessary to span the vertices of $\C$ 
(visible in Fig.~\figref{Path_acaa_linear}) should be avoided,
for they violate radial monotonicity.
(Any angle $\angle u_{i-1}, u_i, u_{i+1}$ smaller than $90^\circ$ implies
non-monotonicity at $u_i$ with respect to $u_{i-1}$.)
Avoiding these sharp turns forces $\g$ to exit $\C$ before spanning the vertices.
Although radial monotonicity is not used in the proofs to follow,
it is the intuition behind the proofs.

\subsection{Lemmas~1,2,3,4}
\seclab{Lemmas}
Of the four possible types of $\g$, $acaab$ is the ``closest'' to being
unzippable, so we start with this type.

\begin{lemma}
For sufficiently small $\o_a$, $\o_c$, and $\e$,
any cut-path $\g$ of type $acaab$ must leave and reenter $\C$
to avoid overlap. Therefore, $\C$ cannot be unzipped with
this type of cut-path.
\lemlab{acaab}
\end{lemma}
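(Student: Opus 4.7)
The plan is to argue by contradiction: assume a cut-path $\g$ of type $acaab$ stays within $\C$ except at its single exit point $b \in \bcC$, and show that the developments $R$ and $L$ must cross in the plane. Since the paper establishes that the developed vertex images depend only on the combinatorial type of $\g$, I would first compute the planar positions of $a_1$ (the shared root), $c_R, c_L$, $a_{2,R}, a_{2,L}$, $a_{3,R}, a_{3,L}$, and the common exit $b_R = b_L$, by sequentially unfolding the faces of $\C$ that the combinatorial type forces us to use. For small $\o_a, \o_c$ these images sit close to a flat limit in which $R$ and $L$ coincide, permitting a first-order expansion in the curvatures.

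Next I would exploit the fact that each $\g$-segment lies in a flat subregion of $\C$, so its $R$- and $L$-developments are congruent planar curves of equal length. Consequently the Euclidean distance between consecutive vertex images lower-bounds the realizable segment length, and the shapes of the paired $R/L$ segments are constrained to be congruent up to the rigid motion induced by the local unfolding. I would then identify the forced crossing: after $\g$ passes $c$ and is routed to $a_2$, the step to $a_3$ requires a near-$180^\circ$ reversal of direction (in the flat limit, $c \to a_2$ and $a_2 \to a_3$ point nearly oppositely). Combined with the small opening $\o_c$ at $c$, this reversal places $a_{3,L}$ on the opposite side of the $R$-polyline from $a_{2,L}$, so the $L$-curve must cross $R$ somewhere between those two images, regardless of how either segment is shaped.

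Finally I would argue that the only way to avoid the crossing is for the $\g$-segment between $a_2$ and $a_3$ (or between $a_3$ and $b$) to take a long detour---long enough to swing a vertex image to the opposite side of the $R$-path. But any such detour staying within $\C$ is bounded in length by the diameter of the top plus a term of order $\e$ from the apron, so for sufficiently small $\o_a, \o_c, \e$ this budget is inadequate. Therefore $\g$ must exit $\bcC$ and re-enter, violating the single-touch constraint (item~4 of Section~\secref{Constraints}), and no unzipping of type $acaab$ exists within $\C$.

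The main obstacle is making the second step fully rigorous: turning the intuition that ``radial monotonicity fails at $a_2$'' into a crossing statement that covers \emph{every} choice of segment curving. The cleanest route is likely a topological argument tracking the cyclic order of vertex images around the small developed region---together with continuity of the $L$ curve---rather than an explicit case analysis of curve shapes.
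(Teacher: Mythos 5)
Your skeleton matches the paper's: fix the combinatorial type, use the fact that the developed vertex images are determined, use the congruence of the paired $R_i$ and $L_i$ sub-paths, and argue that avoiding an $R/L$ crossing forces $\g$ off the thin cap. But the two steps that carry all the weight do not hold up. First, the claim that the reversal at $a_2$ forces a crossing ``regardless of how either segment is shaped'' is false: Fig.~\figref{Devel_acaa_arc_5_10} exhibits an $acaab$ cut-path whose segments are shaped as circular arcs and whose $R$ and $L$ developments do not cross. The crossing is avoidable; what is unavoidable is that every crossing-avoiding shape sweeps through a region that lies outside the developed image of a thin cap. Your topological claim (``$a_{3,L}$ is on the opposite side of the $R$-polyline'') is not even well-posed until the shape of $R_3$ is pinned down, and pinning it down is exactly where the containment hypothesis must enter. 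The paper does this via a dichotomy: to avoid the earlier portions of $L$, the prefix of $R_3$ must pass to the right of $a''_2$ (or symmetrically $L_3$ right of $a'_2$), and that single topological fact fixes the \emph{displacement vector} of the tightest prefix $(a'_2,a''_2)$, hence an effective turn of roughly $70^\circ$ at $a_2$ where only about $30^\circ$ is available while staying on the $\e$-apron en route to $a_3$.

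Second, your proposed patch---bounding the length of any in-$\C$ detour by the diameter plus $O(\e)$---fails on both counts. A non-self-crossing curve inside $\C$ is not length-bounded by the diameter (it can spiral), and length is not the obstruction anyway: the forced detour (ultimately, the arc centered at the combined rotation center $x$) is only modestly longer than the straight chord from $a_2$ to $a_3$, and the forced prefix past $a''_2$ actually shrinks to zero length as $\o_a,\o_c\to 0$. The correct quantity is directional/positional, not metric: for fixed curvatures, the $\approx 70^\circ$ initial heading at $a_2$ steps outside the apron once $\e$ is small enough, and that order of quantifiers ($\e$ chosen after $\o_a,\o_c$) is essential to the lemma. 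Also a small repair: $R_i$ and $L_i$ are congruent because $\r+\l=2\pi$ at every non-vertex point of $\g$, not because each segment lies in a flat subregion---the segments generally cross edges of $\C$. Your closing admission that the crossing statement is not yet rigorous identifies the real gap; the missing idea is the ``pass right of $a''_2$'' dichotomy and the resulting effective-turn-angle bound, not a length budget.
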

\begin{proof}
We have already seen in Fig.~\figref{Path_acaa_linear} that straight
connections between the vertices leads to overlap.
Fig.~\figref{RMlogic4frames}(a) repeats the set-up of that figure, with
added notation. Let $R_1,R_2,R_3$ be the portions of the
right development $R$ between vertices, and similarly for $L_i$.
We now imagine that $R_i$ and $L_i$ are arbitrary cuts between their vertex
endpoints. We concentrate on $R_3$ and $L_3$.

From the fact that the images of the vertices, and in particular, $a_2$,
are in their correct developed planar locations, we can derive
constraints on the shape of the $R_3$ and $L_3$ paths.
The shape of $R_i$ determines $L_i$ and vice versa, because for all
non-vertex points of $\g$, $\r+\l = 2\pi$.
Thus $R_i$ and $L_i$ are congruent as curves, but rigidly rotated differently
by the curvatures along $\g$.

There are only two topological possibilities for $R_3$ and $L_3$
to avoid %crossing each other or 
crossing earlier portions of $R$ and $L$,
%in the vicinity of $a_2$, 
illustrated in Fig.~\figref{RMlogicTopology}.
In~(a) of the figure, $R_3$ passes right of $a''_2$ on its way
counterclockwise to $a'_3$, and 
in~(b), $L_3$ passes right of $a'_2$ on its way
clockwise to $a''_3$.
The situations are analogous in the neighborhood of $a_2$,
and we concentrate only on the former more direct route.

Knowing that $R_3$ passes to the right of $a''_2$ determines the vector 
displacement of the tightest possible prefix $(a'_2,a''_2)$ of $R_3$, but not the shape of that
prefix. This vector displacement forms an effective angle
$\angle c',a'_2,a''_2$ of much larger than the near-$30^\circ$ necessary
to stay on the narrow $\e$-apron.
Fig.~\figref{RMlogic4frames}(a) and~(b) show that this angle is
nearly $70^\circ$, well beyond $30^\circ$.
(And the angle is larger if $R_3$ passes further to the right of $a''_2$.)
This $70^\circ$ turn implies an effective surface angle $\r = 290^\circ$ to the right of $\g$
on $\C$ at $a_2$, ``effective'' because the exact shape of $\g$ is unknown.
The exact angle and length of vector displacement depend on $\{\o_a,\o_c\}$,
but for any given curvatures, we can choose an $\e$ small enough so
that the prefix steps $\g$ exterior to $\C$.
Thus $\g$ must leave $\C$ to
avoid overlap before it completes its tour of the vertices.
Although this proves the lemma, we continue the analysis
below to reveal a deeper structure.
\end{proof}

%see Fig.~\figref{RMlogic4frames}.
%%%%%%%%%%%%%%%%%%%%%%%%%%%%%%%%%Figure Begin
\begin{figure}[htbp]
\centering
\includegraphics[width=0.75\linewidth]{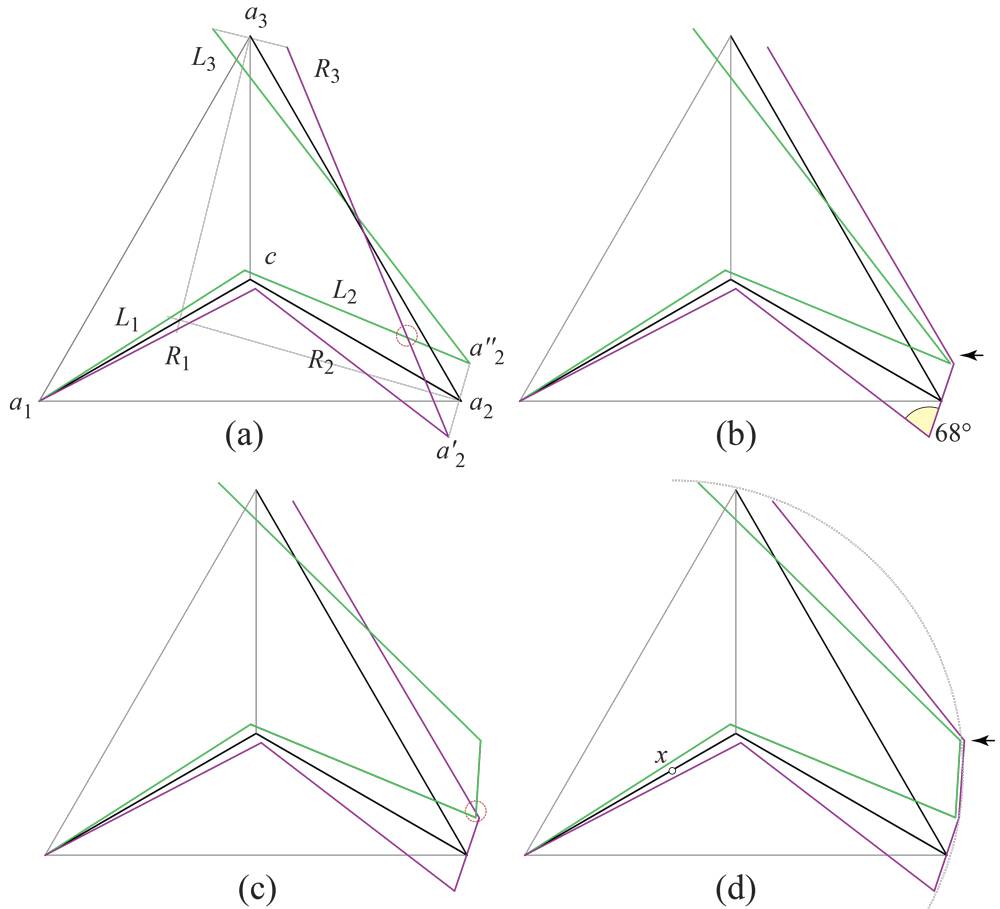}
%\fbox{Figures/xxx.pdf}
\caption{Analysis of $\g$ of type  $acaab$.
(a)~Opening at $a_1$ and $c$ causes $R_3 / L_3$ overlap.
(b)~$R_3$ bends around $a''_2$.
(c)~$L_3$ complements $R_3$, which again intersects $L_3$.
(d)~$L_3$ complements $R_3$. $R_3$ is following the arc centered on $x$.}
\figlab{RMlogic4frames}
\end{figure}
%%%%%%%%%%%%%%%%%%%%%%%%%%%%%%%%%Figure End

%see Fig.~\figref{RMlogicTopology}.
%%%%%%%%%%%%%%%%%%%%%%%%%%%%%%%%%Figure Begin
\begin{figure}[htbp]
\centering
\includegraphics[width=0.85\linewidth]{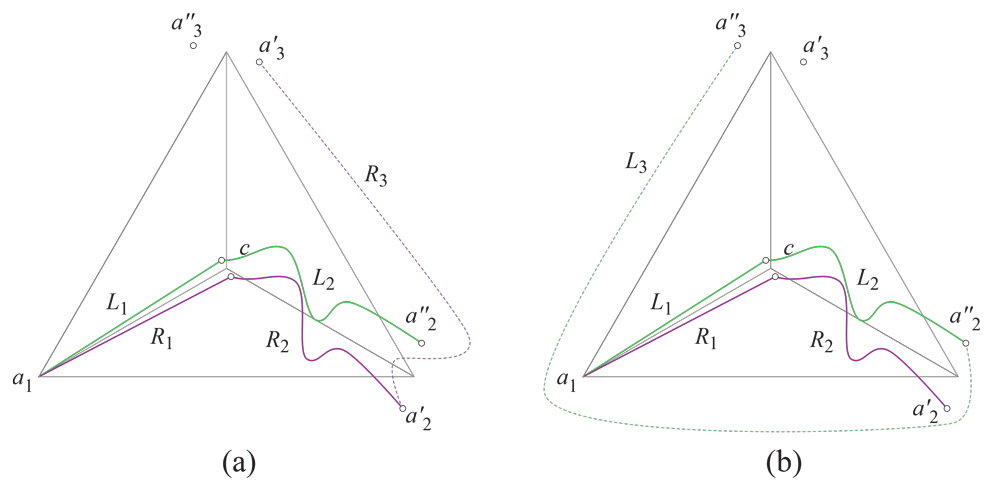}
%\fbox{Figures/xxx.pdf}
\caption{Possible paths from $a_2$ to $a_3$.
(a)~$R_3$ skirts $a''_2$.
(b)~$L_3$ skirts $a'_2$.}
\figlab{RMlogicTopology}
\end{figure}
%%%%%%%%%%%%%%%%%%%%%%%%%%%%%%%%%Figure End

Knowing this constraint just derived on the prefix of $R_3$, we 
know $L_3$ must complement $R_3$ on that prefix,
which leads to Fig.~\figref{RMlogic4frames}(c).
Again there is an overlap intersection further along $R_3$. Altering $R_3$ to again bend around $L_3$
leads to Fig.~\figref{RMlogic4frames}(d).
Continuing this process of incrementally determining constraints on $R_3$ that
are mirrored in $L_3$, 
leads to the conclusion that $R_3$ must follow (or be outside of) the arc of a circle
centered at $x$, where $x$ is the combined center of rotation of
the rotations at $a_1$ and at $c$.
This is why we can be sure the angle at $a_2$ is well beyond $30^\circ$
independent of $\{\o_a,\o_c\}$:
it is determined by (an approximation to) the tangent to this circle.
Finally, $R_3$ and $L_3$ are opened further by the curvature $\o_a$ at $a_2$, which does
not alter the previous analysis. 

Here we pause to discuss the ``combined center of rotation'' just used.
Any pair of rotations about two distinct points is equivalent to a single rotation
about a combined center. In our situation, the two rotations are 
$\o_a$ about $a_1$ and $\o_c$ about $c$.
For small rotations, they are equivalent to a rotation by $\o_a+\o_c$
about the weighted center
$$
x= \frac{ \o_a a_1 + \o_c c }{ \o_a+\o_c} \;.
$$
This point is indicated in Figs.~\figref{RMlogic4frames}(a) and~(d).
This result on combining rotations is proved in both~\cite{o2017addendum} and~\cite{barvinok2017pseudo}
(and likely elsewhere).

The above analysis suggests that $\C$ can be unzipped if the apron were large enough
to include the circle arc that $\g$ must follow from $a_2$ to $a_3$. And indeed
Fig.~\figref{Devel_acaa_arc_5_10} shows that this is true.
An interesting consequence of this unzipping is that, 
even with a small apron, if we close the convex cap $\C$
by adding an equilateral triangle base to form a closed convex polyhedron $\P$,
then $\P$ does have an unzipping. Follow the path shown
in Fig.~\figref{Devel_acaa_arc_5_10}, and complete it by
extending $\g$ to cut $(b_3, b_1, b_2)$, leaving $b_2 b_3$ uncut. Then the
arc illustrated would lie on the unfolding of the base $\triangle b_1 b_2 b_3$.

%see Fig.~\figref{Devel_acaa_arc_5_10}.
%%%%%%%%%%%%%%%%%%%%%%%%%%%%%%%%%Figure Begin
\begin{figure}[htbp]
\centering
\includegraphics[width=0.75\linewidth]{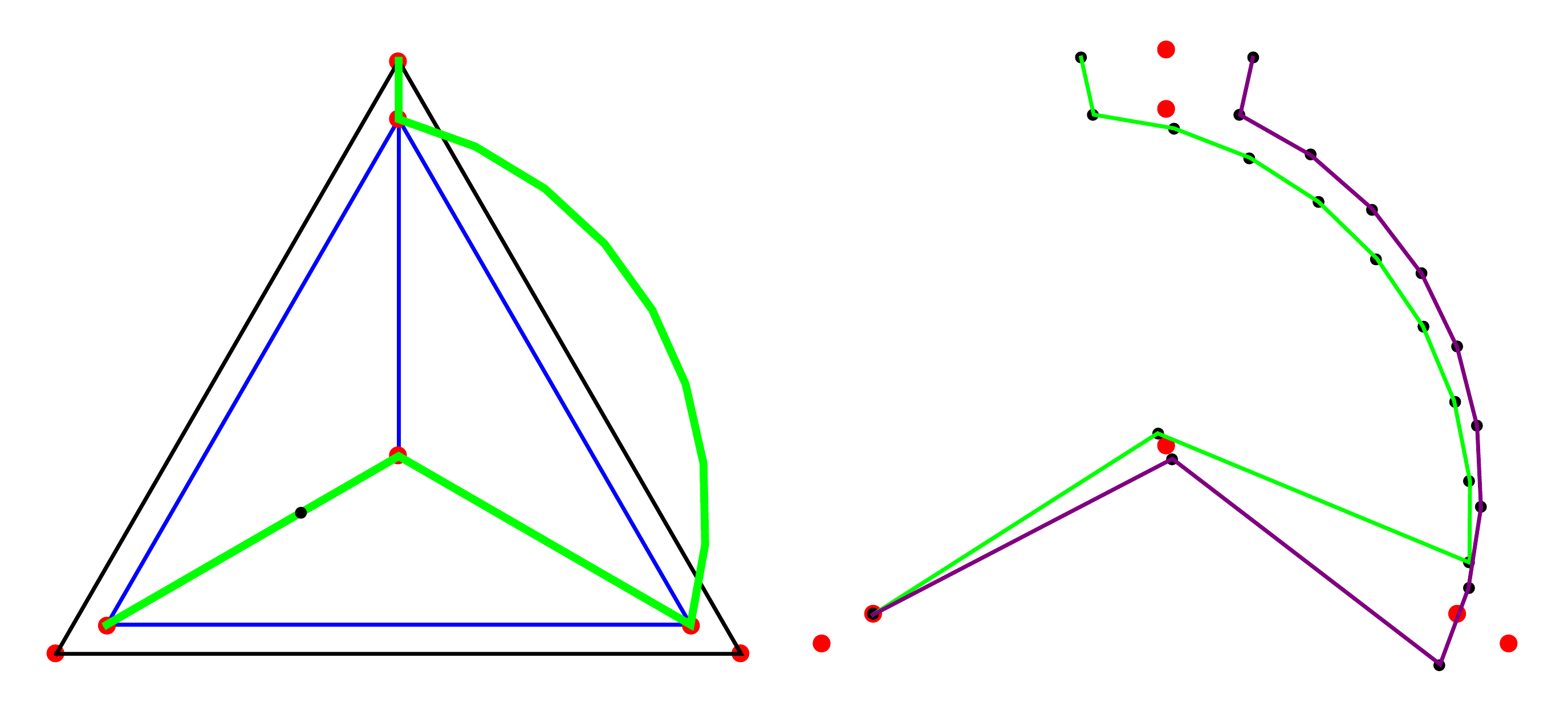}
%\fbox{Figures/xxx.pdf}
\caption{An $acaab$ unzipping of $\C$ extending outside $\bcC$.
Right: The $R$ and $L$ developments do not cross.}
\figlab{Devel_acaa_arc_5_10}
\end{figure}
%%%%%%%%%%%%%%%%%%%%%%%%%%%%%%%%%Figure End

We now turn to the other three types of cut-paths $\g$.
The proof for type $caaab$ is similar to Lemma~\lemref{acaab},
and so will only be sketched.

\begin{lemma}
For sufficiently small $\o_a$, $\o_c$, and $\e$,
any cut-path $\g$ of type $caaab$ must leave and reenter $\C$
to avoid overlap. Therefore, $\C$ cannot be unzipped with
this type of cut-path.
\lemlab{caaab}
\end{lemma}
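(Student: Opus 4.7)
The plan is to mimic the structure of the proof of Lemma~\lemref{acaab} almost verbatim, since the accumulated rotation by the time the critical segment is traversed turns out to be the same. I would label the four developed sub-paths $R_1,R_2,R_3,R_4$ and $L_1,L_2,L_3,L_4$ corresponding to the edges of $\g=(c,a_1,a_2,a_3,b)$. The essential observation is that when the path reaches $a_2$ it has already accumulated a rotation of $\o_c$ at $c$ followed by $\o_a$ at $a_1$. This is the same total rotation (and the same combined center of rotation) as in type $acaab$, so the critical segment is again $R_3,L_3$ joining $a_2$ to $a_3$.

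Next I would redo the topological case split of Fig.~\figref{RMlogicTopology}: either $R_3$ passes to the right of the image $a''_2$ on its way counterclockwise to $a'_3$, or $L_3$ passes right of $a'_2$ on its way clockwise to $a''_3$; by the symmetry between the two developments I would analyze only the first. Applying the incremental constraint argument used in Lemma~\lemref{acaab}, the tightest non-overlapping prefix of $R_3$ is forced to follow (or lie outside) an arc of a circle centered at the combined center of rotation
$$ x \;=\; \frac{\o_c\, c + \o_a\, a_1}{\o_c+\o_a}\,, $$
which, as before, sits well inside $\triangle b_1 b_2 b_3$. Consequently the vector displacement $(a'_2,a''_2)$ makes an effective angle with the direction along the apron that is bounded below independently of $\o_a,\o_c$, and well above the $\sim 30^\circ$ allowance available along a narrow $\e$-apron.

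I would close the argument exactly as in Lemma~\lemref{acaab}: fix the curvatures, then pick $\e$ small enough that this forced arc exits the apron. Thus any candidate $R_3$ either leaves $\C$ or creates overlap, so $\g$ must leave and reenter $\C$, violating item~4 of the constraints in Section~\secref{Constraints}. Since a symmetric argument handles the $L_3$-skirts-$a'_2$ case, no cut-path of type $caaab$ can be an unzipping.

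The main obstacle I anticipate is not the geometry, which is almost identical to that of Lemma~\lemref{acaab}, but rather the bookkeeping in the topological case analysis: in type $caaab$ the root is $c$ (an interior apex) rather than an apron vertex, so the developed images $c',a'_1,a''_1,a'_2,a''_2$ sit in a somewhat different configuration, and I would have to verify that neither topological choice for $R_3$ accidentally creates room to dodge the arc obstruction. A careful redraw of the analog of Fig.~\figref{RMlogic4frames} with $c$ in the root position should confirm that the obstruction survives intact, since the combined-rotation center $x$ is unchanged.
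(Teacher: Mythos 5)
Your proposal is correct and follows essentially the same route as the paper, which likewise reduces type $caaab$ to the machinery of Lemma~\lemref{acaab} and concludes that the relevant sub-paths are forced onto circular arcs that exit an arbitrarily thin apron (the paper centers those arcs on $c$ and already obtains the obstruction on the segment leaving the \emph{first} $a$-vertex, where only a $30^\circ$ turn is available). One small correction: at your critical vertex $a_2$ the turn needed to stay on the apron is $\angle a_1 a_2 a_3 \approx 60^\circ$, not $30^\circ$, but since the forced tangent direction to the circle about $x$ still points out of the $\e$-apron by essentially the same margin as in Lemma~\lemref{acaab}, your conclusion stands.
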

\begin{proof}
The cut-path with straight segments overlaps at two spots in development,
as shown in Fig.~\figref{Devel_caaa_linear_5_10}
(cf.~Fig.~\figref{Layout_caaa}).
Using the same reasoning as in Lemma~\lemref{acaab},
except that the rotations are centered on $c$ (rather than on both $a_1$
and $c$), leads to the conclusion that $R_2$ and $R_3$ must both
deviate from the $30^\circ$ turn at $a_2$ and the $60^\circ$ turn at $a_3$
needed to stay on an arbitrarily thin apron. In fact, $R_2$ and $R_3$
must follow circle arcs centered on $c$.
Doing so would in fact allow $\C$ to be unzipped if apron were large enough, as shown
in Fig.~\figref{Devel_caaa_arcs_5_10}.
But for an arbitrarily thin $\e$-apron, $\g$ must exit $\C$ before visiting
all vertices, and so cannot be unzipped with this type of cut-path.
\end{proof}
%see Fig.~\figref{Devel_caaa_linear_5_10}.
%%%%%%%%%%%%%%%%%%%%%%%%%%%%%%%%%Figure Begin
\begin{figure}[htbp]
\centering
\includegraphics[width=0.75\linewidth]{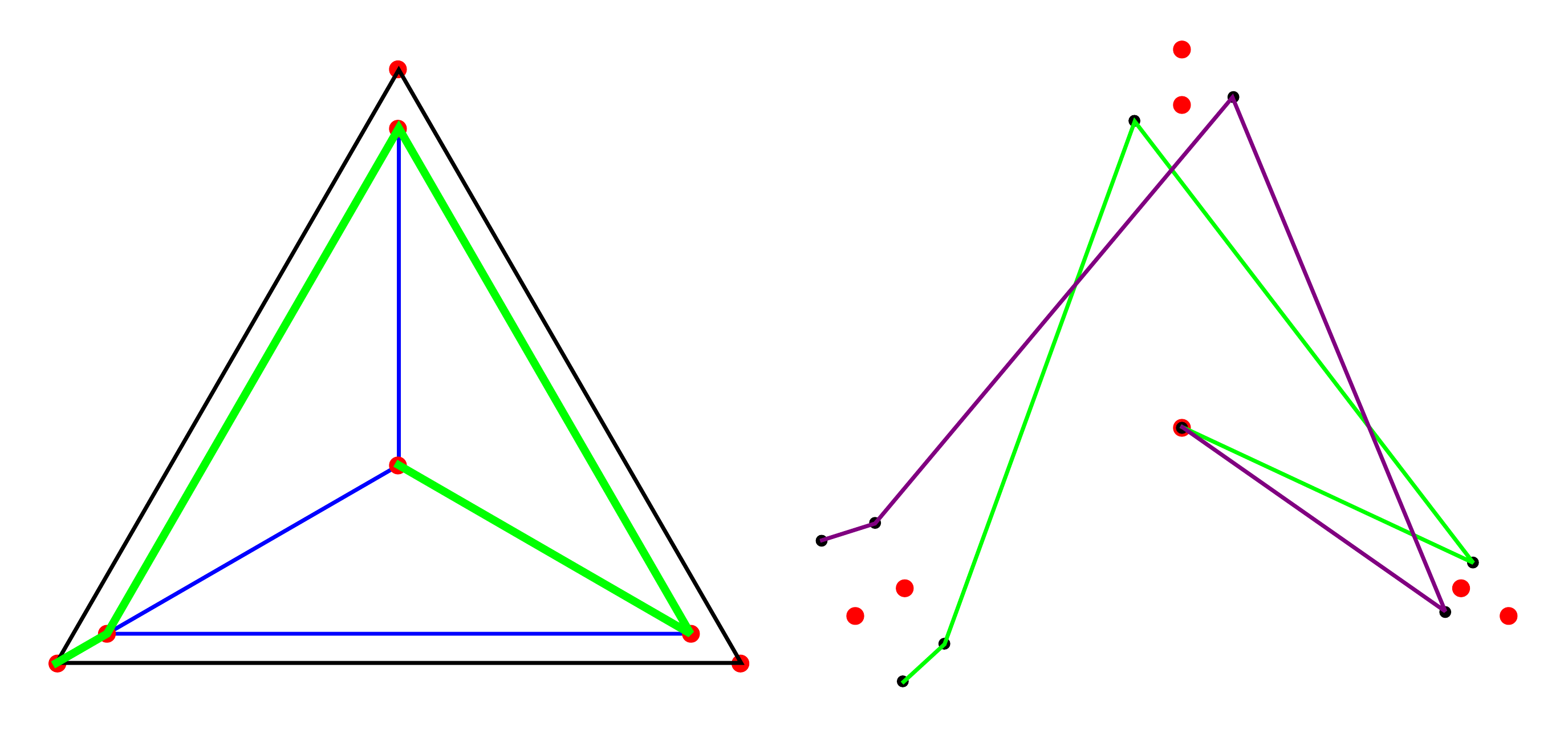}
%\fbox{Figures/xxx.pdf}
\caption{The cut-path type $caaab$ leads to overlap with straight segments.}
\figlab{Devel_caaa_linear_5_10}
\end{figure}
%%%%%%%%%%%%%%%%%%%%%%%%%%%%%%%%%Figure End
%see Fig.~\figref{Devel_caaa_arcs_5_10}.
%%%%%%%%%%%%%%%%%%%%%%%%%%%%%%%%%Figure Begin
\begin{figure}[htbp]
\centering
\includegraphics[width=0.75\linewidth]{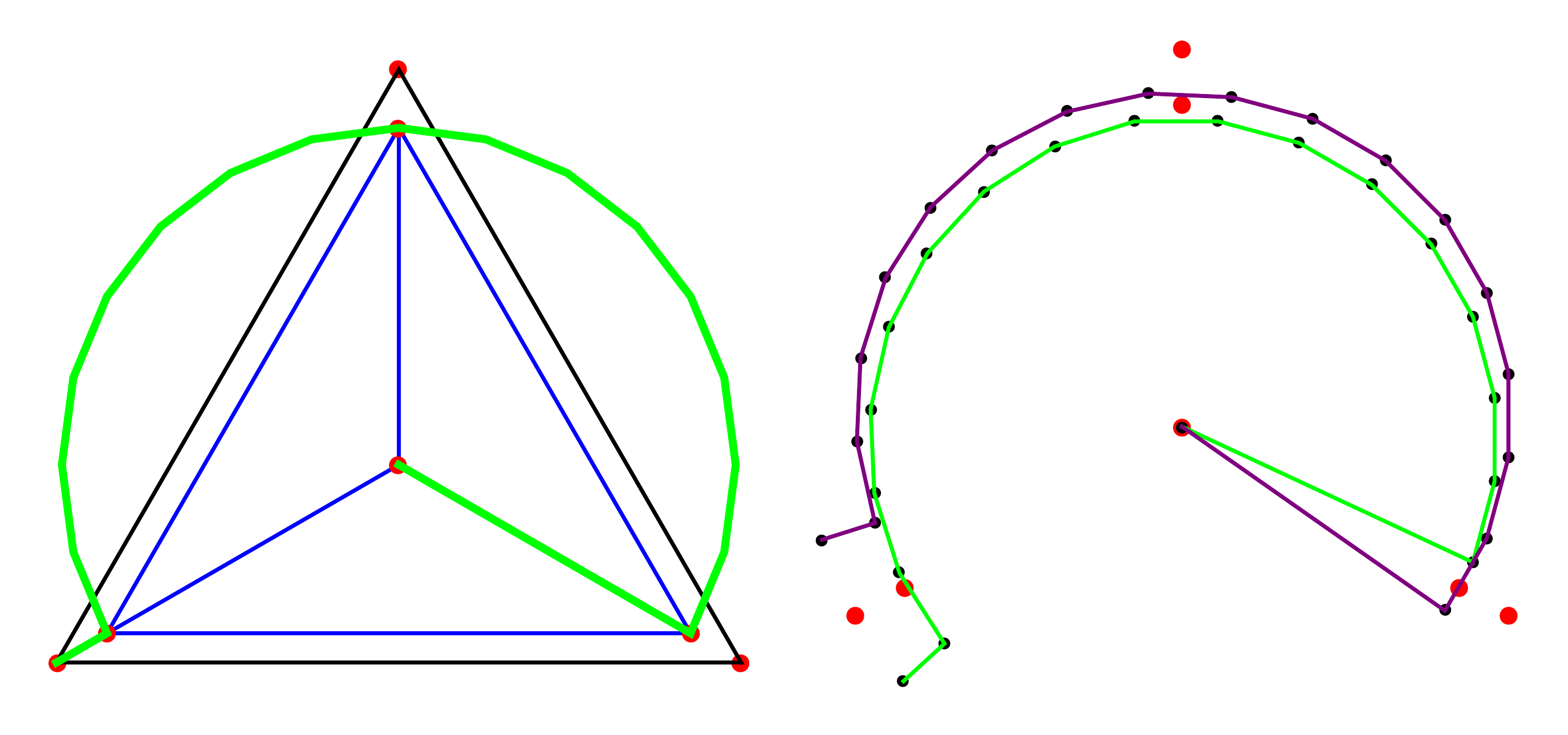}
%\fbox{Figures/xxx.pdf}
\caption{An $caaab$ unzipping of $\C$ extending outside $\bcC$.
Right: The $R$ and $L$ developments do not cross.}
\figlab{Devel_caaa_arcs_5_10}
\end{figure}
%%%%%%%%%%%%%%%%%%%%%%%%%%%%%%%%%Figure End

The third type of cut-path, $aacab$ (Fig.~\figref{Devel_aaca_linear_5_10}),
is different in that not
even following arcs outside of $\C$ would suffice to unzip it without overlap.
%see Fig.~\figref{Devel_aaca_linear_5_10}.
%%%%%%%%%%%%%%%%%%%%%%%%%%%%%%%%%Figure Begin
\begin{figure}[htbp]
\centering
\includegraphics[width=0.75\linewidth]{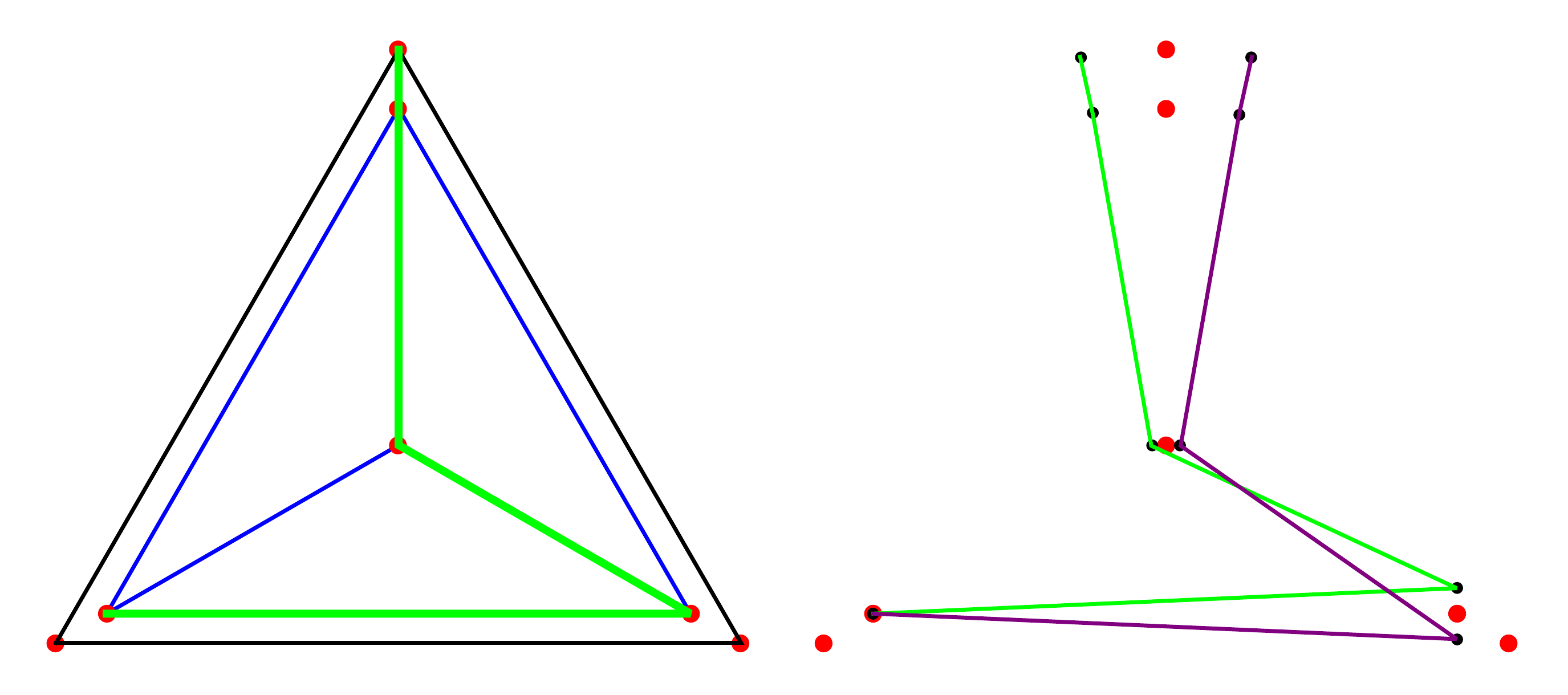}
%\fbox{Figures/xxx.pdf}
\caption{The cut-path type $aacab$ leads to $L / R$ overlap.}
\figlab{Devel_aaca_linear_5_10}
\end{figure}
%%%%%%%%%%%%%%%%%%%%%%%%%%%%%%%%%Figure End

\begin{lemma}
For sufficiently small $\o_a$, $\o_c$, and $\e$,
any cut-path $\g$ of type $aacab$ cannot visit all vertices
without overlap in the development. 
Therefore, $\C$ cannot be unzipped with
this type of cut-path.
\lemlab{aacab}
\end{lemma}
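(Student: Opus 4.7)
The plan is to apply the same combined-center-of-rotation framework used in Lemmas~\lemref{acaab} and~\lemref{caaab}, and to show that with $c$ in the middle of the cut-path there is no way to avoid overlap, even by routing $\g$ outside $\C$.

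First I would label the inter-vertex right-development segments as $R_1$ (from $a_1$ to $a'_2$), $R_2$ (from $a'_2$ to $c'$), and $R_3$ (from $c'$ to $a'_3$), with $L_1,L_2,L_3$ defined analogously on the left side. The accumulated rotation before $R_3$ is $\o_a$ at $a_2$ followed by $\o_c$ at $c$, giving combined rotation center
\[
x \;=\; \frac{\o_a\, a_2 + \o_c\, c}{\o_a + \o_c} ,
\]
which lies on the segment $\overline{a_2\,c}$. The incremental-tightening argument from Lemma~\lemref{acaab} then forces $R_3$ and $L_3$ to approximate congruent circular arcs about $x$ emanating from $c'$ and $c''$. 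An effective-angle computation analogous to the $70^\circ$ estimate in Lemma~\lemref{acaab} shows that for $\e$ sufficiently small the required tangent direction at $c'$ points off the apron, so $\g$ must leave $\C$ just as before.

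The main obstacle, and the essential new feature of the $aacab$ case, is ruling out the external-arc rescue that worked in the previous two lemmas. Here the center $x$ lies on $\overline{a_2\,c}$, i.e., in the planar region already swept out by $R_2$ and $L_2$, so the arc of radius $|c'-x|$ cannot be completed to $a'_3$ without crossing back through $L_2$. I would make this precise by computing, for small $\o_a$ and $\o_c$, the angular positions about $x$ of the four developed images $c',c'',a'_3,a''_3$, and showing that any pair of congruent curves $R_3$ (from $c'$ to $a'_3$) and $L_3$ (from $c''$ to $a''_3$) related by rotation of $\o_a+\o_c$ about $x$ must, regardless of extrinsic shape, cross one of $L_2$ or $R_2$. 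The additional curvature $\o_a$ at $a_3$ rotates the tail further but, as in Lemma~\lemref{acaab}, cannot be used to evade the crossing. This concludes that no cut-path of type $aacab$ can produce a non-overlapping unfolding, regardless of whether it is permitted to leave $\C$.
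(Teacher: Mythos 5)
Your overall framework (arc constraints from combined rotations, then ruling out the external-arc rescue) is the right one, but you have located the obstruction in the wrong segment and left the decisive step unproved. The paper's argument for $aacab$ concerns $R_2$, the piece from $a_2$ to $c$: the rotation at the root $a_1$ forces $R_2$ to lie on or outside the circular arc centered at $a_1$ through the image of $a_2$, and since $|a_1 c| < |a_1 a_2|$ (roughly $1/\sqrt{3}$ versus $1$) the developed image of $c$ lies strictly inside that circle, so $R_2$ can never reach $c$ at all --- the path is deflected toward $a_3$ instead. This kills the cut-path before $R_3$ is ever an issue, and it is precisely what makes $aacab$ the case where even leaving $\C$ cannot help. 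Your argument instead presupposes that $\g$ reaches $c$ and tries to derive a contradiction from $R_3$; that could in principle work as a proof by contradiction, but you never actually establish the crossing: ``I would make this precise by computing\dots'' is a plan, not a proof, and the heuristic you offer --- that the center $x$ lies on $\overline{a_2 c}$ and hence in the region already swept out --- does not by itself imply that congruent curves rotated about $x$ must cross $R_2$ or $L_2$; an arc centered at a point inside a region can easily avoid that region.

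There are two further slips. First, the combined center is miscomputed: following the convention of Lemma~\lemref{acaab}, the rotation relating $R_3$ to $L_3$ for the path $(a_1,a_2,c,a_3,b)$ composes the rotations at the vertices preceding $c$, namely $\o_a$ at $a_1$ and $\o_a$ at $a_2$, giving angle $2\o_a$ about (approximately) the midpoint of $a_1 a_2$, with the opening $\o_c$ at $c$ treated as a further deviation; your formula $(\o_a a_2+\o_c c)/(\o_a+\o_c)$ omits the root and wrongly includes the segment's own start vertex. Second, the remark that ``the required tangent direction at $c'$ points off the apron'' is misplaced: $c$ sits over the centroid, far from the $\e$-apron, so the segment from $c$ to $a_3$ is not an apron-width constraint at all; the apron/exit mechanism belongs to segments running between consecutive $a_i$'s, and in this case it is superseded by the fact that $c$ is simply unreachable from $a_2$.
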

\begin{proof}
We analyze the constraints on $R_2$ in
Fig.~\figref{RMlogic_aaca}.
The rotation at $a_1$ determines the prefix of $R_2$ following the same
reasoning as in Lemma~\lemref{acaab},
and again already in Fig.~\figref{RMlogic_aaca}(b) we have an angle
at $a_2$ much larger than the $30^\circ$ turn required to reach $c$.
This already establishes the cut-path cannot be an unzipping.
But in fact, it is clear that $R_2$ must follow the circle arc shown in 
Fig.~\figref{RMlogic_aaca}(d), centered on $a_1$. Following this arc
makes it impossible for $R_2$ to reach $c$:
the path is forced toward $a_3$ instead.
\end{proof}

%see Fig.~\figref{RMlogic_aaca}.
%%%%%%%%%%%%%%%%%%%%%%%%%%%%%%%%%Figure Begin
\begin{figure}[htbp]
\centering
\includegraphics[width=0.75\linewidth]{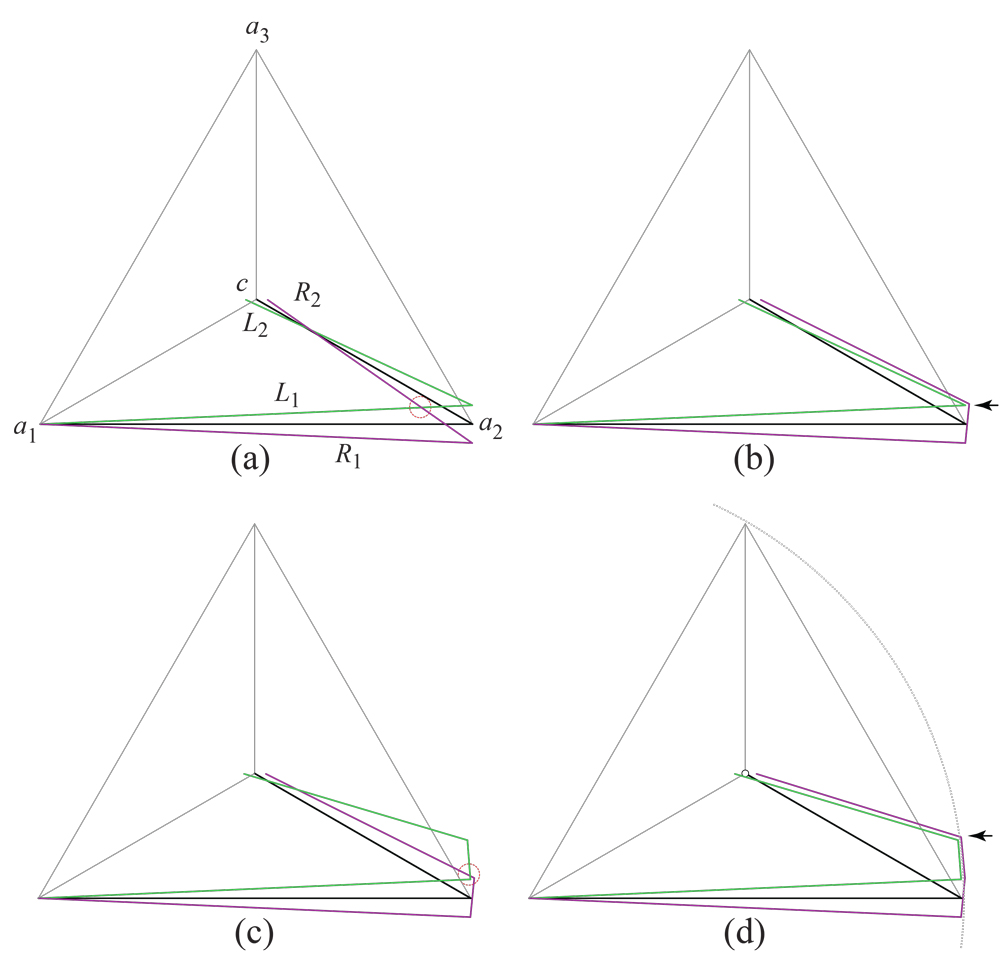}
%\fbox{Figures/xxx.pdf}
\caption{Analysis of $\g$ of type  $aacab$.
(a)~Opening at $a_1$  causes $R_2 / L_2$ overlap.
(b)~$R_2$ bends around $a''_2$.
(c)~$L_2$ complements $R_2$, which again intersects $L_2$.
(d)~$L_2$ complements $R_2$. $R_2$ is following the arc centered on $a_1$.}
\figlab{RMlogic_aaca}
\end{figure}
%%%%%%%%%%%%%%%%%%%%%%%%%%%%%%%%%Figure End

\noindent
The last combinatorial cut-path type, $aaacb$, mixes themes in the
others: first, $\g$ must go outside $\C$, which already establishes there
is no unzipping, and second, even if the apron were large enough, $\g$ cannot
reach $c$. We rely just on the first impediment.

\begin{lemma}
For sufficiently small $\o_a$, $\o_c$, and $\e$,
any cut-path $\g$ of type $aaacb$ cannot visit all vertices
without leaving and re-entering $\C$.
Therefore, $\C$ cannot be unzipped with
this type of cut-path.
\lemlab{aaacb}
\end{lemma}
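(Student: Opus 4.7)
My approach would closely mirror the proofs of Lemmas~\lemref{acaab} and~\lemref{caaab}. For $\g=(a_1,a_2,a_3,c,b)$, label the developed pieces between consecutive vertices as $R_2,R_3,R_4$ (with mirrors $L_2,L_3,L_4$) corresponding to the legs $a_1\!\to\!a_2$, $a_2\!\to\!a_3$, $a_3\!\to\!c$. I would begin by drawing the straight-segment development and observing, as in Figs.~\figref{Path_acaa_linear} and~\figref{Devel_caaa_linear_5_10}, that $R$ and $L$ already overlap once the openings $\o_a$ have accumulated at consecutive $a$-vertices.

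Next I would focus on the middle leg $R_3$ from $a_2$ to $a_3$, in direct analogy with the analysis in the proof of Lemma~\lemref{acaab}. The topological dichotomy of Fig.~\figref{RMlogicTopology} applies verbatim: either $R_3$ passes right of $a''_2$ or, symmetrically, $L_3$ passes right of $a'_2$. Choosing the former, the iterative complement construction of Fig.~\figref{RMlogic4frames}(a)--(d) pins $R_3$ to (or outside) the arc of a circle whose center $x$ is the combined center of the two $\o_a$ rotations accumulated on this leg, one at the starting vertex $a_1$ and one at $a_2$. Since the two rotations are equal, $x=\frac{1}{2}(a_1+a_2)$ and the combined angle is $2\o_a$. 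The tangent to this arc at $a'_2$ dictates an effective turning angle at $a_2$ much larger than the small angle required to hug the $\e$-apron en route to $a_3$.

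Accordingly, for any fixed $\o_a,\o_c>0$ I would shrink $\e$ so that the forced arc escapes the projection of $\C$ before $\g$ reaches $a_3$; this is the very same mechanism verified in Lemmas~\lemref{acaab} and~\lemref{caaab}, only with a different combined center. Because the remaining vertices $a_3$ and $c$ (the latter sitting above the centroid of $\triangle a_1a_2a_3$) lie strictly inside the projection of $\C$, the path would then have to re-enter $\C$ to reach them and finally exit again at some $b\in\bcC$. This yields at least two contacts of $\g$ with $\bcC$, contradicting constraint~4 of Section~\secref{Constraints} and ruling out any $aaacb$ unzipping.

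The main obstacle I anticipate is not conceptual but bookkeeping: verifying that the forced arc really does bulge outward (away from the edge $a_2a_3$ and into the apron region) rather than inward. The combined rotation center $x$ lies on edge $a_1a_2$, so the arc from $a'_2$ curves away from that edge; a short computation, essentially identical to the one implicit in Lemma~\lemref{acaab}, confirms the bulge exceeds $\e$ for all sufficiently small $\e$. The alternative side ($L_3$ skirting $a'_2$) is symmetric and yields the same conclusion, so the lemma follows without invoking the secondary ``cannot reach $c$'' impediment mentioned in the text.
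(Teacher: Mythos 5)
Your proposal follows essentially the same route as the paper: reduce to the arc-forcing argument of Lemma~\lemref{acaab} applied to the $a_2 \to a_3$ leg, conclude that the forced initial direction at $a_2$ is roughly perpendicular to $a_1 a_2$ (about $90^\circ$ versus the $60^\circ$ needed to hug the apron), and shrink $\e$ so that $\g$ must exit $\C$, violating constraint~4. The only discrepancy is that the paper takes the forced arc to be centered on $a_1$ (treating the opening at $a_2$ itself as a further, harmless rotation, exactly as in Lemma~\lemref{acaab}) rather than on your combined center $\tfrac12(a_1+a_2)$; since both candidate centers lie on the line $a_1 a_2$, the tangent direction at $a_2$---and hence the conclusion---is unaffected.
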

\begin{proof}
Fig.~\figref{Devel_aaac_linear_5_10} shows there is overlap when $\g$ is composed of straight segments.
By now familiar reasoning, the portion of $\g$ from $a_2$ to $a_3$ must follow a
circular arc centered on $a_1$. This is illustrated in
Fig.~\figref{Devel_aaac_arc_5_10}, and already steps outside
and $\e$-thin apron in the neighborhood of $a_2$, where it
makes an angle of approximately $90^\circ$ rather than the necessary $60^\circ$.
This establishes the claim of the lemma.
\end{proof}

%see Fig.~\figref{Devel_aaac_linear_5_10}.
%%%%%%%%%%%%%%%%%%%%%%%%%%%%%%%%%Figure Begin
\begin{figure}[htbp]
\centering
\includegraphics[width=0.75\linewidth]{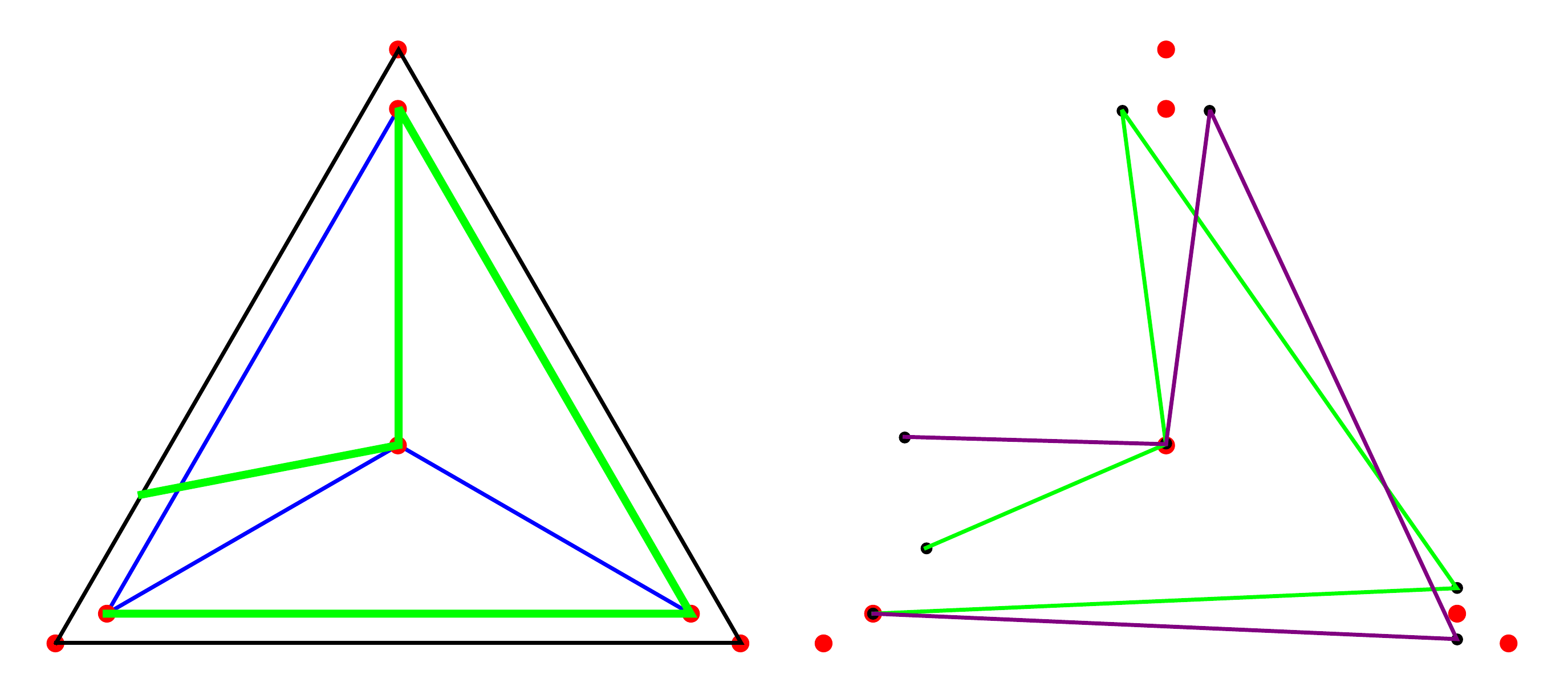}
%\fbox{Figures/xxx.pdf}
\caption{The cut-path type $aaacb$ leads to $L / R$ overlap with straight segments.}
\figlab{Devel_aaac_linear_5_10}
\end{figure}
%%%%%%%%%%%%%%%%%%%%%%%%%%%%%%%%%Figure End

%see Fig.~\figref{Devel_aaac_arc_5_10}.
%%%%%%%%%%%%%%%%%%%%%%%%%%%%%%%%%Figure Begin
\begin{figure}[htbp]
\centering
\includegraphics[width=0.75\linewidth]{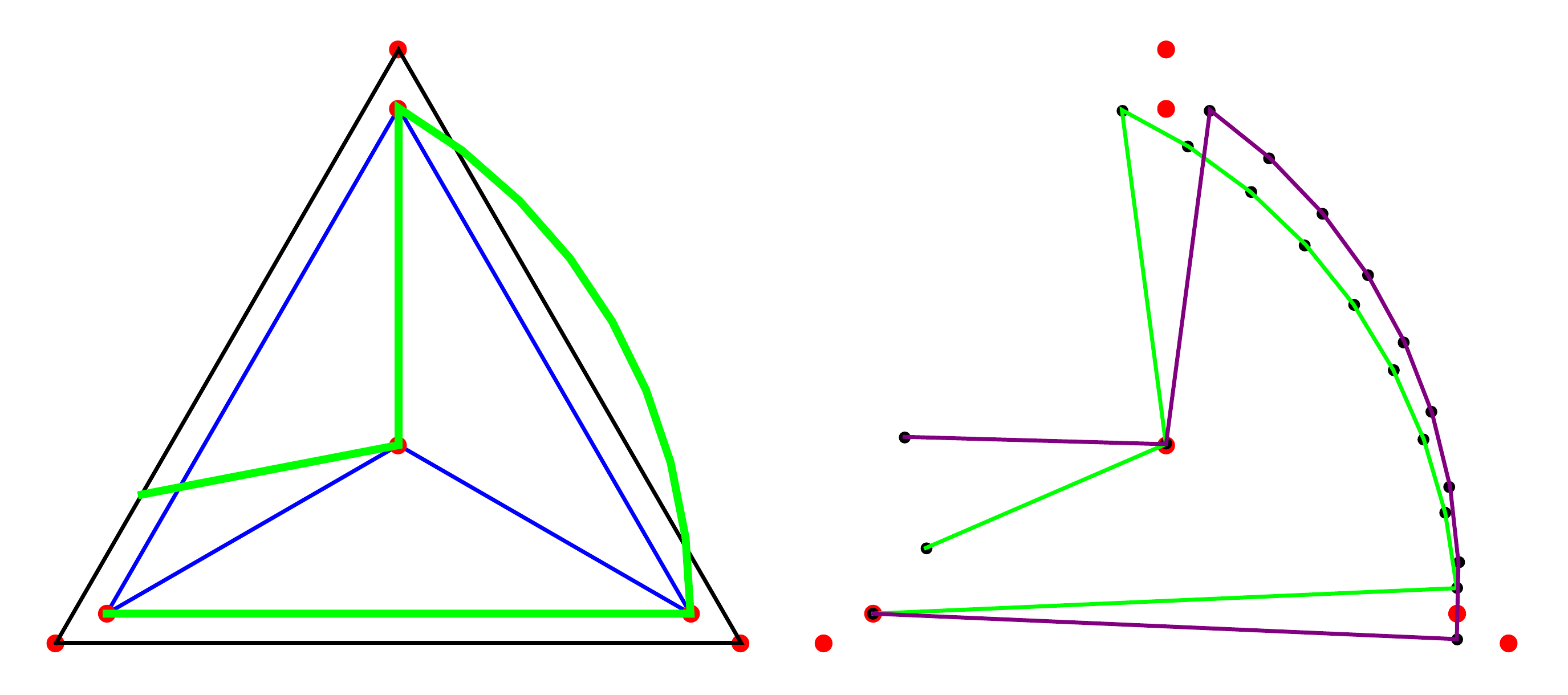}
%\fbox{Figures/xxx.pdf}
\caption{The cut-path type $aaacb$ extends outside $\C$ on the $(a_2,a_3)$ arc
(and still $R$ crosses $L$ near $a_3)$.}
\figlab{Devel_aaac_arc_5_10}
\end{figure}
%%%%%%%%%%%%%%%%%%%%%%%%%%%%%%%%%Figure End

%see Fig.~\figref{Devel_aaac_arcs_5_10}.
%%%%%%%%%%%%%%%%%%%%%%%%%%%%%%%%%%Figure Begin
%\begin{figure}[htbp]
%\centering
%\includegraphics[width=0.75\linewidth]{}
%%\fbox{Figures/xxx.pdf}
%\caption{yyy.}
%\figlab{Devel_aaac_arcs_5_10}
%\end{figure}
%%%%%%%%%%%%%%%%%%%%%%%%%%%%%%%%%%Figure End

\medskip
\noindent
We restate Theorem~1 in more detail:

\setcounter{theorem}{0}
\begin{theorem}
Convex caps $\C$ with sufficiently small 
$\{\o_a,\o_c, \e\}$, as depicted in Fig.~\figref{ZipCex3D},
have no unzipping: they are un-unzippable.
Thus there are arbitrarily flat convex caps that cannot be unzipped.
\end{theorem}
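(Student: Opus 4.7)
The plan is to combine Lemmas~\lemref{acaab}, \lemref{caaab}, \lemref{aacab}, and \lemref{aaacb} via a symmetry/enumeration argument, and then handle the angle bound $\F$ separately by a continuity/scaling observation.

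First I would observe that by the constraints listed in Section~\secref{Constraints}, any candidate unzipping $\g$ of $\C$ must begin at one of the four interior vertices $\{c, a_1, a_2, a_3\}$, must visit the other three, and must terminate on $\bcC$ (possibly but not necessarily at a $b_i$). The full dihedral symmetry of the cap $\C$ lets us relabel $a_1, a_2, a_3$ arbitrarily, so the combinatorial type of $\g$ is determined solely by the position of $c$ in the visiting order. This yields exactly the four string types $caaab$, $acaab$, $aacab$, $aaacb$ that were enumerated just before the lemmas --- and nothing else.

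Next I would invoke the four lemmas in turn: each rules out its type for sufficiently small $\{\o_a, \o_c, \e\}$, either by forcing $\g$ to leave and re-enter $\C$ (violating constraint~4) or by making it impossible for $\g$ to reach a required vertex while remaining on the $\e$-apron. Taking the minimum of the four threshold triples gives a single parameter regime $\{\o_a, \o_c, \e\}$ in which no cut-path of any combinatorial type can be an unzipping of $\C$. This already establishes the un-unzippability part of the theorem.

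For the ``arbitrarily flat'' clause, I would observe that $\F$, the maximal angle between a face normal and the $z$-axis, is controlled by the heights $z_a, z_c$ (with $\e$ held fixed or shrunk), and that the curvatures $\o_a, \o_c$ tend to $0$ as $z_a, z_c \to 0$. So given any prescribed $\F > 0$, I would first choose $z_a, z_c$ small enough that $\f(f) \le \F$ for every face $f$ of $\C$, and then, if necessary, further shrink $z_a, z_c, \e$ so that the combined threshold from the four lemmas is also satisfied. The main (and essentially only) obstacle is checking that the parameter regimes demanded by the four lemmas are simultaneously satisfiable together with an arbitrarily small $\F$; since each lemma's conclusion holds on an open neighborhood of $(0,0,0)$ in $(\o_a, \o_c, \e)$-space, and the cap geometry depends continuously on $(z_a, z_c, \e)$ with $\F \to 0$ as $z_a, z_c \to 0$, this compatibility is immediate and the theorem follows.
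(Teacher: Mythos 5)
Your proposal is correct and follows essentially the same route as the paper's own proof: enumerate the four combinatorial cut-path types by symmetry, apply Lemmas~\lemref{acaab}--\lemref{aaacb} with a common sufficiently small parameter regime, and note that the lemmas' independence from the exact curvature values yields arbitrarily flat counterexamples. Your added detail on the continuity of $\F$ in $(z_a, z_c)$ merely makes explicit what the paper leaves implicit.
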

\begin{proof}
We argued that only four combinatorial types of cut-paths $\g$
are possible on $\C$.
Lemmas~\lemref{acaab}, \lemref{caaab}, \lemref{aacab}, \lemref{aaacb}
established that for sufficiently small
curvatures $\{\o_a,\o_c\}$ and a sufficiently thin $\e$-apron,
each of these cut-path types fails to unzip $\C$.
Because the arguments are independent of the exact values of 
$\{\o_a,\o_c\}$, only requiring a sufficiently small $\e$ to match,
the claim holds for arbitrarily flat convex caps.
\end{proof}

\section{Discussion}
\seclab{Discussion}
It is tempting to hope that the negative result of Theorem~\thmref{ZipCex}
can somehow be used to address the open problem for convex polyhedra.
However, as mentioned earlier (Sec.~\secref{Lemmas}), closing the convex cap in Fig.~\figref{ZipCex3D}
%to a polyhedron 
by adding a base creates a polyhedron that can in fact be unzipped.
Perhaps this is not surprising, as the proofs rely crucially
on the fact that $\C$ has a boundary $\bcC$.
I have also explored using several un-unzippable convex caps to tile
a closed convex polyhedron, but so far to no avail.
The open problem from~\cite{lddss-zupc-10} 
quoted in Sec.~\secref{Introduction} remains open.

%%%%%%%%%APPENDIX%%%%%%%%
%\newpage
%\section*{Appendix}

\newpage %widowed sec heading
\bibliographystyle{alpha}
\bibliography{ZipCex}
\end{document}